\theoremstyle{plain}
\newtheorem{theorem}{Theorem}
\newtheorem{lemma}{Lemma}
\newtheorem{corollary}{Corollary}
\newtheorem{proposition}{Proposition}
\newtheorem{protocol}{Protocol}
\theoremstyle{definition}
\newtheorem{rem}{Remark}
\newcommand{\vect}[1]{\boldsymbol{\mathbf{#1}}}
\begin{document}
\title{Byzantine-Resilient Distributed Computation via Task Replication and Local Computations} 


\author{%
}

\author{%
  \IEEEauthorblockN{Aayush Rajesh \thanks{AR's work was done while at IIT Bombay. NK's work was supported by a SERB MATRICS grant. The work of VP was supported by DAE, Govt. of India, under project no. RTI4001.}}
  \IEEEauthorblockA{Stanford University, USA \\ aayushr@stanford.edu}
  \and 
  \IEEEauthorblockN{Nikhil Karamchandani}
  \IEEEauthorblockA{IIT Bombay, India \\ nikhilk@ee.iitb.ac.in}
  \and 
  \IEEEauthorblockN{Vinod M. Prabhakaran}
  \IEEEauthorblockA{TIFR, India \\ vinodmp@tifr.res.in}
}

\maketitle


\begin{abstract}
   We study a distributed computation problem in the presence of Byzantine workers where a central node wishes to solve a task that is divided into independent sub-tasks, each of which needs to be solved correctly. The distributed computation is achieved by allocating the sub-task computation across workers with replication, as well as solving a small number of sub-tasks locally, which we wish to minimize due to it being expensive. For a general balanced job allocation, we propose a protocol that successfully solves for all sub-tasks using an optimal number of local computations under no communication constraints. Closed-form performance results are presented for cyclic allocations. Furthermore, we propose a modification to this protocol to improve communication efficiency without compromising on the amount of local computation.
\end{abstract}

\section{Introduction}
The massive scale of modern datasets has necessitated the use of distributed computing architectures, such as MapReduce \cite{dean2008mapreduce} and Apache Spark \cite{zaharia2010spark}, for various data analytics and machine learning pipelines. Carefully designed distributed processing of jobs can lead to significant speedups in various ubiquitous tasks such as matrix multiplication \cite{benson2015framework,bulucc2012parallel}, training of deep neural networks \cite{dean2012large}, optimization \cite{boyd2011distributed}, and Markov Chain Monte Carlo (MCMC) \cite{neiswanger2014asymptotically}. 

We consider a master-worker distributed computing architecture, where the central node is assigned a `large' task which it splits into several `small' sub-tasks, which can be distributed among a collection of worker nodes. The worker nodes are expected to perform their assigned sub-tasks and then communicate the results to the central node, which aggregates them appropriately to complete the original task. There are several challenges in such a system; one which has garnered the most attention is that the overall latency is determined by the computation speed of the slowest worker, which can significantly hamper the end-to-end system performance \cite{dean2013tail,li2014communication,yadwadkar2016multi}. One solution to this problem of \textit{straggler nodes} is to introduce redundancy into the task allocation by replicating the same task at multiple workers. There has been a line of systems works which have developed this idea \cite{dean2008mapreduce,ananthanarayanan2013effective,chen2013improving}, as well as several theoretical studies on characterizing the optimal redundancy-latency tradeoff \cite{joshi2017efficient,gardner2015reducing,wang2014efficient,behrouzi2021efficient}. Beyond replication, there is also a large body of work employing erasure coding for straggler mitigation \cite{dutta2016short,lee2017speeding,yu2020straggler,karakus2017straggler,li2018near,pmlr-v70-tandon17a}.

Distributed computing systems are also more susceptible to errors in the reported results from some workers, which could be due to software/hardware malfunctions or the presence of malicious agents \cite{kim2014flipping, castro1999practical,brun2011smart}. In this work, we consider the \textit{Byzantine adversaries} model \cite{lamport2019byzantine}, where we assume some subset (of at most a certain specified size) of workers is malicious and their response to their allotted tasks can be completely arbitrary. As before, the goal of the central node is to recover the correct results for all the sub-tasks irrespective of the actions of the malicious workers. To assist in this, we assume that the central node can (occasionally) query an oracle to perform \textit{local computation} of a chosen task and obtain its true result. The main goal of our work is to characterize the optimal tradeoff between the replication factor and the number of local computations needed, for any given bound on the number of possible malicious workers. To achieve this, we first establish a converse that provides a lower bound on the number of local computations needed in the worst case, and then provide a sub-task allocation and decoding strategy whose performance matches the converse. 

The impact of Byzantine adversaries has been studied in a wide variety of distributed computing settings, ranging from matrix multiplication and polynomial computations \cite{solanki2019non,hong2023byzantine,keshtkarjahromi2019secure,yu2019lagrange} to gradient aggregation for model training \cite{konstantinidis2023detection,chen2018draco,rajput2019detox}, and optimization \cite{data2020data, kuwaranancharoen2020byzantine}. Our work differs from these in that we allow for local computations at the central node, motivated by \cite{10206794, hofmeister2025byzantine,jain2024interactive}, which first studied the tradeoff with respect to local computations for the problem of \textit{gradient coding} \cite{pmlr-v70-tandon17a} under Byzantine adversaries. There, the central node is interested in computing the gradient with respect to a loss function over a large dataset, and each worker node is assigned a subset of the data over which it needs to find the gradient. We can think of the computation of the gradient over each individual datapoint as a sub-task, and then the overall task is to find the sum of the results of individual tasks (full gradient). On the other hand, in our setting, the tasks are arbitrary, and the central node needs to recover all the individual results correctly.



The rest of the paper is organized as follows. Section~\ref{Sec:Setting} describes the problem setting and Section~\ref{Sec:LB} establishes a lower bound on the required number of local computations for any `balanced' task allocation scheme. Section~\ref{Sec:FullC} provides the design and analysis for our proposed scheme, while Section~\ref{Sec:LessC} discusses a variant that requires much lower communication between the workers and the central node. 

\section{Problem Setting}
\label{Sec:Setting}
We consider the problem of solving a task $T$, whose result can be obtained by dividing it into multiple independent sub-tasks $\{T_i\}$ and then aggregating the outputs of all the sub-tasks, i.e., $T = f(T_1, \ldots, T_p)$. We make no assumptions on the structure of this function $f$. Thus, we treat the problem of solving the task $T$ as equivalent to distributing the computation and solving each of the sub-tasks.

A central node tasked with performing this computation distributes the $p$ sub-tasks among $n$ workers (possibly with some redundancy), of which at most $s$ workers are malicious. This distribution of sub-tasks can be represented in the form of a job allocation matrix $\vect{A} \in \{0,1\}^{p \times n}$ where $\vect{A}_{ij}$ is the indicator of whether the $i$\textsuperscript{th} sub-task is assigned to the $j$\textsuperscript{th} worker. Let $ \mathcal{D}_i \subseteq \{1, \ldots, p\}$ be the set of sub-tasks allocated to the $i$\textsuperscript{th} worker, whom we denote by $W_i$. 

Each worker computes the results corresponding to their allotted sub-tasks. Thereafter, the central node communicates with the workers to obtain their responses to the allotted sub-tasks. Each worker $W_i$ maintains a collection of responses $\Tilde{\vect{T}}_j \ \forall j \in \mathcal{D}_i$. For an honest worker, $\Tilde{\vect{T}}_j$ equals the true results corresponding to the assigned sub-tasks. However, the malicious workers may report incorrect results for some or all of their allotted sub-tasks.

To assist the central node, it is given access to an oracle that it may query to perform a local computation and obtain the true result of a chosen sub-task. A protocol employed by the central node may proceed over several rounds, with each round involving communication from the central node to the workers requesting particular responses, communication from the workers to the central node containing their responses based on the values they maintain for each allotted sub-task, and possible local computations performed by the central node. The responses asked by the central node may either be the value of a particular sub-task or commitments to the value of a sub-task reported by some other worker.

We are primarily concerned with the performance of a  protocol against the following three figures of merit:

\begin{itemize}
    \item \textbf{Local Computations: }The total number of sub-tasks computed locally at the central node, denoted by $c$.
    \item \textbf{Replication Factor: }The average number of workers to which each sub-task is assigned, denoted by
    $$
    \rho = \frac{\sum_{i \in [p], j \in [n]} \vect{A}_{ij}}{p}.
    $$
    \item \textbf{Communication Overhead: }The maximum number of symbols from the underlying alphabet transmitted from the workers to the central node, represented by $\kappa$.
\end{itemize}

Since the replication factor $\rho$ is a proxy for redundancy, we would like to have job allocations that achieve low values of $\rho$. For a fixed choice of job allocation, and hence replication, we would like to have an associated protocol capable of obtaining the correct results for each of the $p$ sub-tasks, and thus the correct result of the overall task, while requiring as few local computations as possible, irrespective of the actions of malicious workers.

In practice, we wish to have balanced job allocations, i.e., where all workers get the same number of sub-tasks. Our work focuses on general balanced job allocations, with closed-form bounds presented for balanced cyclic allocations. As an example, a balanced cyclic allocation distributing $p = n$ sub-tasks among $n$ workers with replication factor $\rho$ is depicted in Figure \ref{fig: balanced cyclic allocation}.

\begin{figure}[htbp]
\centering
$$
\vect{A} = 
\begin{bmatrix}
    \smash{\overbrace{\begin{matrix} 1 & 1 & \cdots & 1 & 1 \end{matrix}}^{\rho}} & \begin{matrix} 0 & 0 & \cdots & 0 & 0 \end{matrix}\\
    \begin{matrix} 0 & 1 & \cdots & 1 & 1 \\ \vdots & \vdots & \ddots & \vdots & \vdots \\ 1 & 1 & \cdots & 1 & 0 \end{matrix} & \begin{matrix} 1 & 0 & \cdots & 0 & 0 \\ \vdots & \vdots & \ddots & \vdots & \vdots \\ 0 & 0 & \cdots & 0 & 1 \end{matrix}
\end{bmatrix}_{n \times n}
$$
\caption{Example of a balanced cyclic allocation}
\label{fig: balanced cyclic allocation}
\end{figure}

It is clear that if a sub-task is allotted to at most $s$ workers, i.e., $\rho \le s$, in the case that all these workers are malicious and may collude to report an incorrect result, we would not be able to identify if there has been a corruption in the sub-task responses. The only way to correctly identify the correctness of every sub-task result would be to perform local computations on all of them in the worst case. Furthermore, if a sub-task is allotted to at least $2s+1$ workers, i.e., $\rho \geq 2s+1$, a simple majority over responses would suffice since the sub-task would have at most $s$ malicious workers, and hence at least $s+1$ honest ones. Therefore, it is meaningful to look at job allocations with a replication factor $\rho = s+u$ where $u$ is an integer taking values in $[1,s]$. In balanced job allocations, this guarantees that each sub-task is allotted to at least $u$ honest workers.

\section{Local Computation Lower Bound}
\label{Sec:LB}
We first establish a lower bound on the local computation requirement of any scheme by constructing an adversarial attack, which is a modified version of the symmetrization attack discussed in \cite{10206794}. We constrain the attack by targeting a block sub-matrix of the job allocation matrix with the original symmetrization attack.

\begin{proposition}
    Let the largest $k \times uk$ sub-matrix (not necessarily contiguous) of 1's in the (balanced) job allocation matrix with replication $\rho = s+u$ have $k = k^*$. An adversary that inflicts exactly $u$ corruptions in each row and exactly $1$ corruption in each column of this sub-matrix will force $c \geq \min\{k^*, \lfloor \frac{s}{u} \rfloor\}$ local computations.
    \label{prop: constrained symm}
\end{proposition}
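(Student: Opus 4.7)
The plan is to exhibit a family of $k+1$ adversarial scenarios, $\mathcal{S}_0,\mathcal{S}_1,\ldots,\mathcal{S}_k$ with $k:=\min\{k^*,\lfloor s/u\rfloor\}$, that produce identical transcripts at the central node across any interactive protocol, but with each $\mathcal{S}_i$ ($i\ge 1$) disagreeing with $\mathcal{S}_0$ on the true value of a distinct sub-task. Fix any $k\times uk$ all-ones sub-matrix (shrinking from $k^*$ if necessary so that $uk\le s$). In the primary scenario $\mathcal{S}_0$, designate the $uk$ column-workers of this sub-matrix as malicious---the budget constraint $uk\le s$ is precisely where the factor $\lfloor s/u\rfloor$ enters---and have each such worker report a fake value $T_i'\neq T_i$ on the unique row $i$ where it is corrupted, while responding truthfully on every other sub-task. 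Thus, for each of the $k$ selected sub-tasks, $u$ of its assigned workers report $T_i'$ while the remaining $s$ workers report the true $T_i$.

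For each $i\in[k]$, I would construct the alternative $\mathcal{S}_i$ by relabeling $T_i'$ as the true value of sub-task $i$ (other true values unchanged) and designating $H_i$---the $s$ workers assigned to sub-task $i$ that reported $T_i$ in $\mathcal{S}_0$---as the malicious set. The key structural property is that $H_i$ automatically includes all $u(k-1)$ sub-matrix columns whose single $\mathcal{S}_0$ lie was on some row $j\neq i$; these workers simply repeat their $\mathcal{S}_0$ responses in $\mathcal{S}_i$, which remain valid lies on the other rows and now also count as a lie on row $i$. Meanwhile, the $u$ sub-matrix columns that lied about $i$ in $\mathcal{S}_0$ become honest in $\mathcal{S}_i$, but their reports $T_i'$ coincide with the new true value, so truthfulness is preserved. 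A case analysis on each worker---categorized by whether it is malicious in $\mathcal{S}_0$, in $\mathcal{S}_i$, in both, or in neither---confirms that the entire response transcript across any number of interactive rounds can be made identical under $\mathcal{S}_0$ and $\mathcal{S}_i$, that the budget $s$ is respected in both, and that sub-tasks outside the selected $k$ rows are completely undisturbed.

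The lower bound then follows by a pigeonhole argument: any protocol using fewer than $k$ local computations leaves some sub-task $i^*\in[k]$ unqueried, and since local computation of any $j\neq i^*$ returns the same value under $\mathcal{S}_0$ and $\mathcal{S}_{i^*}$, the local queries do not distinguish these two scenarios, forcing the protocol to err on $T_{i^*}$ in at least one of them. The main obstacle I anticipate is the bookkeeping in the second step: verifying that $|H_i|=s$ exactly, that $H_i$ intersects the $\mathcal{S}_0$ malicious set in precisely $u(k-1)$ workers (so the reused lies line up), and that workers outside the sub-matrix who happen to be assigned to sub-task $i$ can be made consistently malicious in $\mathcal{S}_i$ without disturbing their honest reports on unrelated sub-tasks.
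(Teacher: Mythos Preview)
Your proposal is correct and follows essentially the same indistinguishability argument as the paper: both construct, for every targeted row $i$, an alternative world in which the $u$ dissenters on row $i$ are honest and the $s$ agree-ers are the malicious set, and then observe that any local computation on a row $j\neq i$ (or outside the sub-matrix entirely) cannot separate this world from the baseline, forcing one query per row. The paper's Appendix~\ref{app: constrained symm attack formal} states this more tersely (``$cu$ malicious workers revealed, $s-cu$ left, two indistinguishable scenarios''), whereas you spell out the full family $\mathcal{S}_0,\ldots,\mathcal{S}_k$ and the four-way case analysis on each worker's status; this extra explicitness is useful---in particular your check that the $u$ columns which lied on row $i$ become \emph{consistently} honest in $\mathcal{S}_i$, and that the $s-u(k-1)$ non-sub-matrix workers in $H_i$ can lie on row $i$ while staying truthful elsewhere, is exactly what the paper's sketch leaves implicit---but the skeleton of the argument is the same.
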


Note that we require taking the minimum between $k^*$ and $\lfloor \frac{s}{u} \rfloor$, because such an attack with $s$ malicious workers can only target at most $\lfloor \frac{s}{u} \rfloor$ sub-tasks. Such an attack would then be feasible targeting $\min\{k^*, \lfloor \frac{s}{u} \rfloor\}$ sub-tasks and having $u \min\{k^*, \lfloor \frac{s}{u} \rfloor\} \leq s$ workers behave maliciously in the largest $k \times uk$ sub-matrix. 

The attack described in the proposition is the symmetrization attack from \cite{10206794} confined to a sub-matrix of 1's in the job allocation matrix. Each corrupted sub-task $i$ has $u$ malicious workers incorrectly reporting $\vect{T}_i ''$ and $s$ workers correctly reporting $\vect{T}_i '$. This gives rise to two indistinguishable scenarios from the central node's view in each corrupted sub-task: either the $u$ workers are malicious and the $s$ workers are a mixture of malicious and honest workers reporting the correct value, or the $u$ workers are honest and the $s$ workers are malicious. By performing a local computation on the $i$\textsuperscript{th} sub-task, $u$ malicious workers are eliminated, leaving $s-u$ remaining malicious workers. Now, in the remaining corrupted sub-tasks, we have $u$ workers reporting one value ($\vect{T}_j ''$) and $s-u$ workers reporting another ($\vect{T}_j '$), and the problem persists. Therefore, we are forced to perform local computations on each of the corrupted sub-tasks, i.e., at least $\min\{k^*, \lfloor \frac{s}{u} \rfloor\}$ such local computations. This argument is made more formal in Appendix \ref{app: constrained symm attack formal}. An example of this attack is depicted in Figure \ref{fig:Constrained Symm}, which tabulates responses in the transpose of the job allocation matrix, with the worker responses corresponding to the largest $k \times uk$ (in this case $2\times 4$) sub-matrix of 1's in the job allocation matrix highlighted in blue. 

\begin{rem}
    If we consider regimes where the number of sub-tasks $p$ is much larger than the number of workers $n$, we may eventually find a $k \times uk$ sub-matrix of 1's where $k \geq \lfloor \frac{s}{u} \rfloor$. Then, the above result implies that an attack exists, such that any protocol will require at least $\lfloor \frac{s}{u} \rfloor$ local computations.
    \label{rem: large sample regime}
\end{rem}

\begin{rem}
    The $\lfloor \frac{s}{u} \rfloor$ local computation bound can be trivially achieved. For any sub-task where fewer than $u$ workers agree on a value, those workers can be declared as malicious, as any true value will have at least $u$ honest workers agreeing with it. We can then sequentially perform local computations on all sub-tasks with multiple reported values, with each value having at least $u$ agreeing workers. Each local computation will eliminate at least $u$ malicious workers, requiring at most $\lfloor \frac{s}{u} \rfloor$ of them.
    \label{rem: trivial bound protocol}
\end{rem}

\begin{figure}[htbp]
    \centering
    \includegraphics[width=0.6\linewidth]{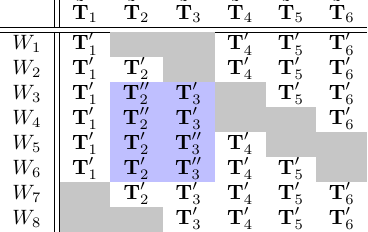}
    \caption{Example of attack for $s = 4, u = 2$ } 
    \label{fig:Constrained Symm}
\end{figure}

While the lower bound on local computation in Proposition \ref{prop: constrained symm} depends on $k^*$, a closed-form expression for $k^*$ may not always be viable for arbitrary job allocations. However, there is such an expression for the specific case of a balanced cyclic allocation.

\begin{proposition}
    For a balanced cyclic allocation matrix with $p = \lambda n$ for $\lambda \in \mathbbm{Z}^+$, and $\rho = s+u$, the largest $k$ such that there exists a $k \times uk$ sub-matrix of 1's in the allocation matrix is given by,
    \begin{equation}
    k_{\lambda}^* = \left\lfloor \frac{\rho + 1}{u + \frac{1}{\lambda}} \right\rfloor
    \label{eq: sub matrix size}
    \end{equation}
    Furthermore, for $\lambda n < p < (\lambda + 1)n$, the corresponding largest value of $k$ is bounded by 
    $$k_{\lambda}^* \leq k^* \leq \min \left\{k_{\lambda}^* + p - \lambda n, k_{\lambda + 1}^* \right\}$$
    \label{prop: sub matrix size}
\end{proposition}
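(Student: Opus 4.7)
The plan is to exploit the cyclic structure of the allocation directly. For $p = \lambda n$, each row is supported on a cyclic arc of $\rho$ consecutive columns, and rows $i,\, i+n,\,\ldots,\, i+(\lambda-1)n$ all share the same support, so the $p$ rows split into $n$ pattern classes of size $\lambda$ each. The key first step is a reduction to contiguous columns: for any $uk$-column subset $C$, a row covers $C$ iff its length-$\rho$ arc contains the minimum cyclic span of $C$ (the smallest cyclic arc containing $C$), and this span has length at least $uk$ with equality exactly when $C$ is an arc. It therefore suffices to optimize over $uk$ consecutive columns.

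Given a fixed arc of $L$ consecutive columns with $L \le \rho$, exactly $\rho - L + 1$ length-$\rho$ arcs contain it. Multiplying by the $\lambda$ rows per pattern, the number of rows whose support covers any $uk$ consecutive columns is $\lambda(\rho - uk + 1)$, so the existence of a $k \times uk$ all-ones sub-matrix forces $k \le \lambda(\rho - uk + 1)$. Rearranging gives $k(1 + \lambda u) \le \lambda(\rho + 1)$, and taking the floor yields $k_\lambda^* = \lfloor \lambda(\rho+1)/(1 + \lambda u)\rfloor$, which is exactly the claimed $\lfloor (\rho+1)/(u + 1/\lambda)\rfloor$.

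For $\lambda n < p < (\lambda+1) n$, set $r = p - \lambda n$. The $\lambda n$-row allocation sits as a row-submatrix of the $p$-row allocation, which itself sits inside the $(\lambda+1)n$-row allocation, so the sandwich $k_\lambda^* \le k^* \le k_{\lambda+1}^*$ comes for free. For the complementary bound $k^* \le k_\lambda^* + r$, I would re-run the covering count, noting that exactly $r$ of the $n$ pattern classes now carry a $(\lambda+1)$-th copy, so any $uk$ consecutive columns are covered by at most $\lambda(\rho - uk + 1) + r$ rows; the resulting inequality $k(1+\lambda u) \le \lambda(\rho+1) + r$ gives $k \le \lfloor(\lambda(\rho+1) + r)/(1+\lambda u)\rfloor$, and since $r/(1+\lambda u) \le r$, pulling the integer $r$ outside the floor yields $k^* \le k_\lambda^* + r$. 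The main obstacle is the cyclic-span reduction in the first step: it must be proved carefully for arbitrary (non-contiguous, possibly wrap-around) column sets, after which the remainder is essentially bookkeeping.
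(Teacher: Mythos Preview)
Your overall strategy matches the paper's: both reduce to contiguous blocks and then do a one-line count. The paper counts from the row side (any $t$ consecutive rows of the square cyclic block share exactly $\rho-(t-1)$ columns, multiply by $\lambda$), you count from the column side (any $uk$ consecutive columns are covered by exactly $\lambda(\rho-uk+1)$ rows). These are dual statements and yield the same inequality $k(1+\lambda u)\le \lambda(\rho+1)$.

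There is, however, a genuine gap in the justification you give for the reduction step. The biconditional ``a row covers $C$ iff its length-$\rho$ arc contains the minimum cyclic span of $C$'' is false. Take $n=10$, $C=\{1,5\}$; the minimum cyclic span is $\{1,2,3,4,5\}$, but the arc $\{5,6,7,8,9,10,1\}$ (say $\rho=7$) contains $C$ without containing the span. What is true, and what you actually need, is that among all $m$-element column sets the number of covering length-$\rho$ arcs is maximized by a contiguous arc. One clean way to see this: a length-$\rho$ arc contains $C$ iff its complement (a length-$(n-\rho)$ arc) lies entirely inside one of the cyclic gaps of $C$; so the count is $\sum_i \max(0,\,g_i-(n-\rho)+1)$ over the gap lengths $g_i$. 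Since $x\mapsto\max(0,x-d+1)$ is convex and $\sum_i g_i=n-m$ is fixed, the sum is maximized when all mass sits in one gap, i.e., when $C$ is contiguous. You correctly flagged this step as the main obstacle, but the sketch you wrote for it will not go through as stated. (The paper's own argument is informal here and does not address the reduction either; it simply asserts the consecutive case and writes down the resulting inequality.)

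For the second part, your recount works but is heavier than necessary. The paper's argument is cleaner: adding one row can raise $k^*$ by at most one (if the enlarged matrix has a $k\times uk$ block, deleting any single row leaves a $(k-1)\times uk$ block, which contains a $(k-1)\times u(k-1)$ block). Iterating over the $r=p-\lambda n$ extra rows gives $k^*\le k_\lambda^*+r$ directly; the sandwich $k_\lambda^*\le k^*\le k_{\lambda+1}^*$ is the same monotonicity you already use.
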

This result is described in Appendix \ref{app: Submatrix of cyclic matrix}. Combining Propositions \ref{prop: constrained symm} and  \ref{prop: sub matrix size}, we have the following corollary.
 
\begin{corollary}
    For a balanced cyclic allocation with $\lambda n \leq p < (\lambda + 1)n$ where $\lambda \in \mathbbm{Z}^+$, and $\rho = s+u$, there exists an adversarial attack such that any protocol requires $c \geq \min \left \{k_{\lambda}^*, \lfloor\frac{s}{u}\rfloor \right \}$, where $k_{\lambda}^*$ is as defined in \eqref{eq: sub matrix size}.
    \label{cor: cyclic lower bound}
\end{corollary}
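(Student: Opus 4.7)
The plan is to combine Propositions \ref{prop: constrained symm} and \ref{prop: sub matrix size} in a straightforward manner. The only subtlety worth noting is that the attack of Proposition \ref{prop: constrained symm} applies to any sufficiently large sub-matrix of 1's, not only the maximum-size one; the corollary uses the guaranteed existence of a sub-matrix of size $k_\lambda^* \times u k_\lambda^*$ from Proposition \ref{prop: sub matrix size} even when $k^*$ may be strictly larger.

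First, I would invoke Proposition \ref{prop: sub matrix size} to confirm that the balanced cyclic allocation matrix contains a $k_\lambda^* \times u k_\lambda^*$ sub-matrix of 1's. When $p = \lambda n$, this is the maximum such sub-matrix with $k^* = k_\lambda^*$. When $\lambda n < p < (\lambda+1)n$, the existence is guaranteed by the lower bound $k_\lambda^* \leq k^*$ stated in the proposition; in either case we may fix one such sub-matrix as the target of the attack.

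Next, I would have the adversary execute the restricted symmetrization attack of Proposition \ref{prop: constrained symm} on this fixed $k_\lambda^* \times u k_\lambda^*$ sub-matrix, with exactly $u$ corruptions per row and exactly one corruption per column. The total number of malicious workers required is $u \cdot \min\{k_\lambda^*, \lfloor s/u \rfloor\} \leq s$, so the attack is feasible within the adversarial budget. The indistinguishability argument used in the proof of Proposition \ref{prop: constrained symm} is local to the chosen sub-matrix of 1's and does not depend on maximality, so it transfers verbatim and forces the central node to perform at least $\min\{k_\lambda^*, \lfloor s/u \rfloor\}$ local computations, yielding the claimed bound.

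The main obstacle, if any, is purely bookkeeping: making sure the hypotheses of Proposition \ref{prop: constrained symm} are satisfied by the sub-matrix supplied by Proposition \ref{prop: sub matrix size}, and that the adversary's budget is respected once we substitute $k_\lambda^*$ for $k^*$. No new mathematical content beyond the two preceding propositions is required.
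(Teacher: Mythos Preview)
Your proposal is correct and matches the paper's approach, which simply states that the corollary follows by combining Propositions~\ref{prop: constrained symm} and~\ref{prop: sub matrix size}. The subtlety you flag about applying the attack to a non-maximal sub-matrix is valid, though one can alternatively sidestep it by invoking Proposition~\ref{prop: constrained symm} as stated and then using the monotonicity $\min\{k^*, \lfloor s/u \rfloor\} \geq \min\{k_\lambda^*, \lfloor s/u \rfloor\}$, which follows from $k^* \geq k_\lambda^*$ in Proposition~\ref{prop: sub matrix size}.
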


\section{Full Communication}
\label{Sec:FullC}
In this section, we propose a protocol to solve the problem under a full communication relaxation, i.e., all workers report the values of all their allotted sub-tasks to the central node. Here, our focus is on the local computation performance.

\subsection{Full Communication Protocol}

We describe our protocol using a ``logic-based'' protocol as a sub-routine, which is discussed in Section \ref{ssec: Logic subroutine}.

\begin{protocol}{Full Communication Protocol}
    \begin{enumerate}
        \item In any sub-task, if strictly less than $u$ workers agree on a value, mark these workers as malicious and eliminate them.
        \item Rank the corrupted sub-tasks based on the decreasing order of the number of disagreements among workers in the sub-tasks. For example, a sub-task where the largest agreeing set is $s - 4$ has more disagreement than one where the largest such set is $s - 1$.
        \item We then greedily perform a local computation on the sub-task with the most disagreement (if there are multiple, then we choose randomly) and eliminate responses of those workers found to be malicious. Recompute the disagreements in the remaining sub-tasks.
        \item Repeat steps 1-3 until we reach a stage where each corrupted sub-task has $\leq u$ disagreeing workers. From this stage, we use the ``logic-based protocol'' (see Appendix \ref{app: logic protocol description}) to solve the remaining sub-tasks and hence the task.
    \end{enumerate}
    \label{full comm protocol}
\end{protocol}

Note that we need to recompute the disagreement in step 3 since we have eliminated some worker responses after the greedy local computation. We show that our protocol performs optimally for a balanced cyclic allocation, in the sense that it achieves the lower bound derived in Corollary \ref{cor: cyclic lower bound}.

\begin{theorem}
    For a balanced cyclic allocation with $p = \lambda n$ and $\rho = s+u$, Protocol \ref{full comm protocol} requires $c \leq \min \{ k_{\lambda}^*, \lfloor \frac{s}{u} \rfloor \}$ local computations to solve the sub-tasks, where $k_{\lambda}^*$ is as in \eqref{eq: sub matrix size}. 
    \label{th: full comm performance}
\end{theorem}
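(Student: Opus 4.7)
The plan is to establish the two upper bounds $c \leq \lfloor s/u \rfloor$ and $c \leq k_\lambda^*$ separately; combined with Corollary~\ref{cor: cyclic lower bound} this yields the claimed optimality. The first bound holds for any balanced allocation and is a direct budget argument; the second uses the structural content of Proposition~\ref{prop: sub matrix size}, which is specific to the cyclic case.

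For $c \leq \lfloor s/u \rfloor$, I would observe that every local computation performed by Protocol~\ref{full comm protocol} returns the true value $T_i$ and thereby identifies every worker on $i$ whose response disagrees with $T_i$ as provably malicious; these workers are then permanently eliminated from all of their assignments. Because Step~1 has purged every agreeing group on a corrupted sub-task of size strictly less than $u$, the set of workers contradicting the oracle on $i$ is a union of surviving groups each of size at least $u$, so it has size at least $u$. Hence each local computation removes at least $u$ distinct malicious workers from the global budget of $s$, forcing at most $\lfloor s/u \rfloor$ local computations overall. The same counting applies to any local computation triggered inside the logic-based sub-routine of Step~4, since Step~1 has already fixed the group-size floor.

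For $c \leq k_\lambda^*$, I would exploit the cyclic structure via Proposition~\ref{prop: sub matrix size}. The plan is to exhibit, from the trace of the protocol, an all-ones sub-matrix of $\vect{A}$ of size at least $c \times uc$: namely, the $c$ target sub-tasks $i_1,\ldots,i_c$ on which local computations are performed, together with a set of $uc$ malicious workers commonly assigned to all of them. By the contiguous-support property of cyclic allocations (each worker covers $\rho$ consecutive sub-tasks) and the greedy rule of Step~3 (which targets the sub-task of maximum disagreement, i.e., where the adversary has concentrated its remaining ``lying'' workers), one argues inductively that after each round the fresh target sub-task and the malicious workers identified there extend the rectangular sub-matrix by one row while re-using a shared pool of at least $u$ columns. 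Once the $c \times uc$ block is exhibited, Proposition~\ref{prop: sub matrix size} gives $c \leq k_\lambda^*$, and a companion argument handles the Step~4 sub-routine: after Step~3 terminates, every remaining corrupted sub-task has $\le u$ disagreeing workers, so the support of any further local computation is already confined to the same block.

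The main obstacle will be the structural sub-matrix extraction. The $\lfloor s/u \rfloor$ bound is a simple counting argument, but carving out a genuine $c \times uc$ rectangle from the greedy trace is delicate: a priori, the malicious workers eliminated in different rounds may live in disjoint cyclic windows, and only the combination of the disagreement$\,>\,u$ precondition for entering Step~3, the greedy choice of target, and the $\rho$-block cyclic geometry confines them to a common $\rho$-wide window of worker indices. I expect the bookkeeping to go through by induction on the iteration count, tracking the live support of the adversary's remaining malicious workers along the cyclic order and verifying that each successive target sub-task lies inside the current live window, so that Proposition~\ref{prop: sub matrix size} can be invoked at the end.
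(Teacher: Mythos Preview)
Your $\lfloor s/u\rfloor$ argument is fine and matches the paper's reasoning (indeed the paper obtains it as a byproduct of the same elimination count). The difficulty is entirely in the $k_\lambda^*$ half, and there your plan diverges from the paper and, as written, has a real gap.

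The paper does \emph{not} try to extract a single $c\times uc$ all-ones rectangle from the protocol trace. Instead it splits the run into the greedy phase (Steps~1--3, say $l$ local computations) and the logic phase (Step~4, $c-l$ computations). The crucial observation you are missing is that each greedy computation is performed only on a sub-task with \emph{strictly more than} $u$ disagreeing workers, so it eliminates at least $u+1$ malicious workers, not just $u$. After the greedy phase the residual adversary has at most $s' = s-(u+1)l$ malicious workers, and the logic phase is then bounded by Corollary~\ref{cyclic u corruption} applied with the reduced parameters $s'$ and $\rho' = s'+u$. A short arithmetic manipulation of $l + \min\{\lfloor (\rho'+1)/(u+1/\lambda)\rfloor,\lfloor s'/u\rfloor\}$, using $\lambda\ge 1$ so that $(u+1)/(u+1/\lambda)\ge 1$, collapses the $l$ and yields exactly $\min\{k_\lambda^*,\lfloor s/u\rfloor\}$. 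Lemma~\ref{lem: u corrupt bound} is what makes Corollary~\ref{cyclic u corruption} available: the logic phase's worst case \emph{is} a constrained symmetrization attack on a sub-matrix of the residual allocation, and that is the only place a rectangle appears in the argument.

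Your proposed route, exhibiting a $c\times uc$ rectangle containing all $c$ target sub-tasks and $uc$ common malicious workers, is not supported by the protocol's behaviour. The greedy rule picks the sub-task of maximum disagreement, which imposes no locality in the cyclic order; the malicious workers uncovered in the greedy phase can sit in a cyclic window disjoint from the workers uncovered in the logic phase, so there is no reason the union of target sub-tasks should share $uc$ common columns. Concretely, an adversary can spend $u+1$ workers to force one greedy computation in one window and then mount a constrained symmetrization with the remaining $s-(u+1)$ workers in a far-away window; the resulting target sub-tasks do not lie in any common $\rho$-block. The bound still holds in such cases, but only because of the $(u+1)$-per-step budget argument combined with the residual application of Corollary~\ref{cyclic u corruption}, not because a global rectangle exists. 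Your inductive ``live window'' sketch would need to prove exactly this budget inequality to close, at which point it becomes the paper's proof.
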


Before proving Theorem \ref{th: full comm performance} in Section \ref{ssec: Proof of Theorem}, we describe the subroutine used in step 4 of Protocol \ref{full comm protocol} and prove some of its properties.

\subsection{Logic-Based Scheme} \label{ssec: Logic subroutine}

We show that under a special kind of attack, balanced job allocation matrices allow one to use logical arguments to eliminate certain malicious workers without even requiring local computations. This is exploited by our ``logic-based'' protocol, which is described in detail in Appendix \ref{app: logic protocol description}. The correctness of the actions of the protocol is shown in Lemma \ref{lem: u corrupt correctness}, while its performance is shown in Lemma \ref{lem: u corrupt bound}. Note that Protocol \ref{full comm protocol} makes no assumption regarding the attack, and only calls the ``logic-based'' protocol as a subroutine when the view after greedy local computations matches this special case.

\begin{lemma}
For $\rho = s+u$ and a balanced job allocation, if each corrupted sub-task has exactly $u$ workers with responses disagreeing with the rest, the protocol in Appendix \ref{app: logic protocol description} correctly eliminates malicious workers and solves for sub-task values. 
\label{lem: u corrupt correctness}
\end{lemma}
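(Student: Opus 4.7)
The plan is to establish two properties of the subroutine: \textbf{soundness} (no honest worker is ever flagged as malicious, and any accepted sub-task value is the true value) and \textbf{resolution} (every sub-task value is eventually recovered). The starting point is a structural observation about each corrupted sub-task under the hypothesis of the lemma. The $\rho = s+u$ workers assigned to a corrupted sub-task split into a minority group $S_u$ of $u$ workers reporting $v_u$ and a majority group $S_s$ of $s$ workers reporting $v_s$. Since $|M|\le s$ and $\rho = s+u$, at least $u$ honest workers are assigned to each sub-task; as honest workers report the truth, the true value must be either $v_u$ or $v_s$. Consequently \emph{exactly one} of $S_u$, $S_s$ consists entirely of malicious workers, while the other contains all honest workers for this sub-task, possibly joined by some malicious workers reporting the truth to blend in.

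With this in hand, I would analyze the cross-sub-task implications. For each corrupted sub-task $i$, introduce a binary hypothesis $h_i \in \{u,s\}$ indicating which of the two groups is the all-malicious one, and call a tuple $(h_i)$ \emph{feasible} if the union of the claimed malicious groups has size at most $s$. Soundness of the subroutine then reduces to verifying that it eliminates a worker only when that worker lies in the malicious group under \emph{every} feasible tuple, and that it accepts a sub-task value only when the same value is pointed to under every feasible tuple. I would trace each deduction of the Appendix~A protocol and exhibit the case analysis that justifies it; no eliminated worker can be honest and no accepted value can be wrong because each deduction is consistent with every explanation compatible with the observed responses and the budget $|M|\le s$.

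The main obstacle will be resolution: arguing that the feasible-hypothesis set is constrained enough for the subroutine to ultimately output every sub-task value. Balancedness is the key lever here --- any feasible tuple with $h_i = s$ forces $S_s^{(i)}$ to equal the entire malicious set, so two distinct corrupted sub-tasks cannot both have $h_i = s$ unless their $s$-groups coincide exactly as worker sets, a strong combinatorial restriction in a balanced allocation. In the generic case where all $h_i = u$, the malicious set is simply the union of minority groups and every true value is $v_s^{(i)}$, which the subroutine reads off directly. The residual work is a short case analysis of the exceptional configurations in which majority-group-malicious hypotheses remain feasible, showing in each case that the specific eliminations dictated by the Appendix~A protocol still pin down a unique reconstruction. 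Combining these cases with the explicit steps of the subroutine yields the lemma.
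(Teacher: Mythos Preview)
Your feasibility-tuple framework is a clean way to think about the problem and is compatible with the paper's reasoning, but the proposal is a plan rather than a proof: both of the places where the work actually lives are explicitly deferred. For soundness you write ``I would trace each deduction of the Appendix~A protocol and exhibit the case analysis that justifies it,'' and for resolution you write ``the residual work is a short case analysis of the exceptional configurations.'' Those case analyses \emph{are} the proof. The paper's argument is precisely this: it fixes an arbitrary pair of corrupted sub-tasks $T_i,T_j$ with minority groups $W^{(i)},W^{(j)}$ and exhaustively splits on the relationship between $W^{(i)},W^{(j)},\mathcal{N}(T_i),\mathcal{N}(T_j)$ (equal minorities; overlapping minorities with equal or unequal majorities; disjoint minorities with various intersection patterns against the other sub-task's neighborhood). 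In each branch it either derives a contradiction from the hypothesis ``$W^{(i)}$ honest'' by exhibiting $>s$ forced-malicious workers, or shows that one local computation identifies $>u$ malicious workers across both sub-tasks. Your $h_i\in\{u,s\}$ language rephrases the same contradictions, but you have not actually produced them.

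Two smaller points. First, your emphasis on resolution is misplaced: the protocol's second loop locally computes any still-unresolved sub-task, so termination with all values recovered is automatic; the lemma's content is entirely soundness of the first-loop eliminations. Second, you flag balancedness as ``the key lever,'' but the paper's case analysis never invokes the worker-side degree; what drives every contradiction is that each sub-task has exactly $\rho=s+u$ assigned workers together with the global budget $|M|\le s$. If you carry out the pairwise case split you will find you do not need balancedness at that step.
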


The proof is presented in Appendix \ref{app: correctness proof} and is via a case analysis of the ``logic-based'' protocol in Appendix \ref{app: logic protocol description}. The first loop in the protocol goes over pairs of corrupted sub-tasks and looks for specific cases in their responses where it is easy to identify malicious workers using no/fewer local computations, based on a proof by contradiction. As an example, if we assume the $u$ disagreeing workers in a sub-task to be honest, the remaining $s$ will be malicious. However, one of these $u$ workers may disagree with a new worker in another sub-task, making that one malicious as well, giving rise to more than $s$ malicious workers. This contradiction implies that the set of workers assumed to be honest is malicious. The second loop performs local computations on the remaining corrupted sub-tasks where no easier resolution is possible. 

\begin{lemma}
    The maximum number of local computations required by the protocol in Appendix \ref{app: logic protocol description} is when the adversary inflicts a constrained symmetrization attack on the largest $k \times uk$ sub-matrix of 1's.
    \label{lem: u corrupt bound}
\end{lemma}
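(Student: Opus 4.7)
The plan is to show a two-way matching: the constrained symmetrization attack on the largest $k^* \times u k^*$ sub-matrix of 1's forces the most local computations from the logic-based protocol, and no other attack satisfying the hypothesis of Lemma~\ref{lem: u corrupt correctness} is worse. Since Proposition~\ref{prop: constrained symm} already gives a lower bound of $\min\{k^*, \lfloor s/u\rfloor\}$ local computations against this specific attack, it suffices to show that the logic-based protocol needs no more than this many local computations against any attack in which each corrupted sub-task has exactly $u$ disagreeing workers.

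First, I would characterize the geometry of admissible attacks. With exactly $u$ disagreeing workers in each of (say) $K$ corrupted sub-tasks, the set of (sub-task, malicious-worker) pairs on the ``minority'' side lies inside a $K \times uK$ sub-matrix of 1's of $\vect{A}$, and similarly for the ``majority'' side. If $K > k^*$, then by definition of $k^*$ such a sub-matrix does not exist, so without loss of generality $K \le k^*$. Moreover, each local computation in the protocol eliminates exactly $u$ malicious workers (the disagreeing ones on one side of the resolved sub-task), so the protocol cannot be forced to invoke the oracle more than $\lfloor s/u\rfloor$ times before exhausting the adversarial budget of $s$. This already caps the total at $\min\{K, \lfloor s/u\rfloor\} \le \min\{k^*,\lfloor s/u\rfloor\}$.

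Next, I would argue that any deviation from a ``clean'' constrained symmetrization is caught by the first loop of the protocol via logical contradiction, so it only reduces the number of required local computations. The key observation is that the first loop iterates over pairs of corrupted sub-tasks and, as sketched in the discussion preceding Lemma~\ref{lem: u corrupt correctness}, flags the configurations in which assuming one side as honest would force strictly more than $s$ malicious workers globally. A pure symmetrization on a $K \times uK$ sub-matrix of 1's is exactly the case in which both sides are consistent with an adversary of budget $s$, so no such contradiction can arise there; conversely, if the disagreeing-worker sets across pairs of corrupted sub-tasks fail to align into the clean rectangular pattern (for instance, a disagreeing worker in one sub-task also appears as a majority worker in a second corrupted sub-task that does not share the same minority pattern), the resulting ``extra'' forced malicious workers push the count over $s$, and the loop eliminates them without spending a local computation. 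Thus any non-symmetrization attack either involves fewer than $K$ truly unresolved sub-tasks after phase one, or invokes fewer local computations in phase two, and in either case its cost is dominated by the pure symmetrization case on the largest $k^* \times u k^*$ sub-matrix.

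The main obstacle I anticipate is the second step: formalizing that every deviation from the clean constrained-symmetrization structure is actually flagged by one of the contradiction cases the first loop of the protocol checks. This calls for a careful case analysis over how the two ``sides'' of disagreeing workers in a pair of corrupted sub-tasks can intersect, coupled with an accounting argument using that the allocation is balanced and $\rho = s+u$, to guarantee that an off-pattern configuration always produces a witness of more than $s$ forced-malicious workers. Once that is established, combining with the upper bound $\lfloor s/u\rfloor$ from the adversarial budget and the lower bound from Proposition~\ref{prop: constrained symm} pins the worst case to the constrained symmetrization attack on the largest $k \times uk$ sub-matrix of 1's, completing the proof.
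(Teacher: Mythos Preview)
Your overall strategy aligns with the paper's, but your step~1 contains a genuine gap, and the paper's route is considerably simpler.

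The paper does not attempt to bound the number $K$ of corrupted sub-tasks. Instead it argues purely via efficiency: every case in the first loop of the logic-based protocol either eliminates malicious workers with \emph{no} local computation, or eliminates strictly more than $u$ malicious workers per local computation; the second loop eliminates exactly $u$ per computation. Under the constrained symmetrization attack, each disagreeing worker disagrees in exactly one corrupted sub-task and is allotted every corrupted sub-task, so none of the first-loop cases fire and all computations fall to the second loop at the minimum efficiency of $u$ eliminations each. Any other attack triggers some first-loop case and is therefore handled at least as cheaply. That is the whole proof.

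Your step~1 asserts that the $K$ corrupted sub-tasks together with their minority workers lie inside a $K\times uK$ all-ones sub-matrix of $\vect{A}$, and hence $K\le k^*$. This is false: a worker disagreeing on sub-task $T_i$ need not be allotted sub-task $T_j$ at all, so there is no reason those entries of $\vect{A}$ are $1$. Concretely, with a cyclic allocation, $\rho=s+u$, a single pair of malicious workers can disagree in \emph{every} sub-task they share, corrupting far more than $k^*$ sub-tasks (this is exactly Case~1 of the first loop, resolved for free). So $K\le k^*$ does not hold, and your upper bound $c\le\min\{K,\lfloor s/u\rfloor\}\le\min\{k^*,\lfloor s/u\rfloor\}$ is unjustified. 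Your step~2 is the right intuition---off-pattern attacks are absorbed by the first loop---but once step~1 is removed you still owe an accounting that the first-loop computations plus the second-loop residue never exceed $\min\{k^*,\lfloor s/u\rfloor\}$, which you have not supplied. The paper's efficiency-per-computation argument handles this in one stroke without ever needing to bound $K$.
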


\begin{proof}

In each case of the first loop, we either eliminate malicious workers without using local computations or eliminate $>u$ malicious workers in each local computation. In the case of a constrained symmetrization attack, each disagreeing worker disagrees in only one corrupted sub-task, and each disagreeing worker has been allotted all corrupted sub-tasks. Therefore, it can be seen that none of the cases in the first loop of the protocol occur. Instead, we perform local computations on all $k$ corrupted sub-tasks in the second loop, with each local computation eliminating exactly $u$ malicious workers. Thus this attack is the worst-case performance of the logic-based protocol.
    
\end{proof}

Since the worst-case performance of the logic-based protocol corresponds to a constrained symmetrization attack, we have the following corollary.

\begin{corollary}
    For a balanced cyclic allocation with $p = \lambda n$ and $\rho = s + u$, when each sub-task has exactly $u$ disagreeing workers, the number of local computations $c$ performed by the ``logic-based'' protocol satisfies $c \leq \min \left \{k_{\lambda}^*, \lfloor\frac{s}{u}\rfloor \right \}$, where $k_{\lambda}^*$ is as defined in \eqref{eq: sub matrix size}.
    \label{cyclic u corruption}
\end{corollary}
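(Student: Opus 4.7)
The plan is to assemble the corollary from two ingredients that have already been established in the excerpt: Lemma \ref{lem: u corrupt bound}, which identifies the worst-case attack for the logic-based protocol, and Proposition \ref{prop: sub matrix size}, which gives the closed-form size of the largest all-ones block sub-matrix for a balanced cyclic allocation. The corollary is then essentially a specialization of the first to the structure provided by the second, together with a simple counting argument using the adversarial budget $s$.

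First, I would invoke Lemma \ref{lem: u corrupt bound} to conclude that, under the hypothesis that every corrupted sub-task has exactly $u$ disagreeing workers, the number of local computations performed by the logic-based protocol is maximized when the adversary executes a constrained symmetrization attack on the largest $k \times uk$ sub-matrix of 1's in the allocation matrix. In that worst case, none of the branches of the first loop fire (as shown in the proof of Lemma \ref{lem: u corrupt bound}), and the second loop performs exactly one local computation per corrupted sub-task, so $c$ equals the number of corrupted sub-tasks, which is $k$.

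Second, I would specialize to the balanced cyclic allocation with $p = \lambda n$. Proposition \ref{prop: sub matrix size} gives the closed form $k \leq k_{\lambda}^* = \lfloor (\rho+1)/(u + 1/\lambda) \rfloor$ for the largest $k$ admitting a $k \times uk$ all-ones sub-matrix. Combined with the previous step, this already yields $c \leq k_{\lambda}^*$.

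Finally, I would apply the adversarial budget constraint. The constrained symmetrization attack on a $k \times uk$ block uses $u$ distinct malicious workers per targeted sub-task (exactly one malicious column per row of the block, all columns distinct), so the total number of malicious workers it consumes is $uk$. Since at most $s$ workers may be malicious, we need $uk \leq s$, i.e., $k \leq \lfloor s/u \rfloor$. Taking the tighter of the two ceilings gives $c \leq \min\{k_{\lambda}^*, \lfloor s/u \rfloor\}$, completing the proof. I do not anticipate any real obstacle: both bounds have already been done elsewhere, and the corollary simply stitches them together. The only thing worth double-checking is that the cardinality constraint $uk \leq s$ is genuinely needed in addition to the geometric bound from Proposition \ref{prop: sub matrix size}, which matches the same minimum structure already appearing in Proposition \ref{prop: constrained symm} and Corollary \ref{cor: cyclic lower bound}, so the matching upper bound in the corollary is exactly the dual of those lower-bound statements.
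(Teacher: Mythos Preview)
Your proposal is correct and mirrors the paper's own (implicit) argument: the paper simply states that the corollary follows because the worst case of the logic-based protocol is the constrained symmetrization attack (Lemma~\ref{lem: u corrupt bound}), and then specializes using the closed-form $k_\lambda^*$ from Proposition~\ref{prop: sub matrix size} together with the budget constraint $uk \le s$ already noted after Proposition~\ref{prop: constrained symm}. One minor wording issue: your parenthetical ``exactly one malicious column per row of the block'' is backwards---it is $u$ malicious columns per row and one corruption per column---but your count of $uk$ malicious workers is correct.
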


\subsection{Proof of Theorem \ref{th: full comm performance}} \label{ssec: Proof of Theorem}

When the logic-based protocol is called in Protocol \ref{full comm protocol}, each sub-task has $\leq u$ disagreeing workers. Any set of $< u$ disagreeing workers in a sub-task can be treated as malicious, as each sub-task has at least $u$ honest workers. Therefore, we only have sub-tasks with exactly $u$ disagreeing workers, which is a valid input to the logic-based protocol.

\begin{proof}

Suppose the greedy phase of the protocol expends $l$ local computations. Since we only target sub-tasks that have more than $u$ disagreements, each local computation reveals $\geq (u+1)$ malicious workers. Therefore, when the greedy phase stops with corrupted sub-tasks having at most $u$ disagreeing workers, we have $\leq s - (u+1)l$ malicious workers left. 

Firstly, any sub-task with an agreeing group of size $< u$ or $> s - (u+1)l$ can be solved trivially since there are at least $u$ honest workers in each sub-task and at most $s - (u+1)l$ malicious workers. In the worst case, there are no such sub-tasks and all we are left with are sub-tasks with groups of size $u$ and $(s-(u+1)l-i) \forall \, i \geq 0$. 

From this point onward, our logic-based argument takes over. From Lemma \ref{lem: u corrupt bound}, the worst case number of corruptions corresponds to the largest $k \times uk$ sub-matrix of 1's. The size of the largest $k \times uk$ sub-matrix in this case, is bounded by the corresponding size when all the sub-tasks have groups of size $u$ and $s-(u+1)l$ since we obtain the former by deleting rows from the latter, an action that can only reduce the size of the largest sub-matrix. This corresponds to the case when $s' = s-(u+1)l$ and $\rho' = s' + u$. Therefore, from  Corollary \ref{cyclic u corruption}, we require less than $ \min \left \{ \left\lfloor \frac{\rho' + 1}{u + \frac{1}{\lambda}} \right\rfloor, \lfloor\frac{s'}{u}\rfloor \right \}$ local computations in this stage. Putting it all together,

\begin{align*}
    c &\leq l + \min \left \{ \left\lfloor \frac{s-(u+1)l + u + 1}{u + \frac{1}{\lambda}} \right\rfloor, \left\lfloor\frac{s-(u+1)l}{u}\right\rfloor \right \} \\
    &= l +  \min \left \{ \left\lfloor \frac{s + u + 1}{u + \frac{1}{\lambda}} -  \frac{l(u+1)}{u + \frac{1}{\lambda}}\right\rfloor, \left\lfloor\frac{s}{u} - \frac{l(u+1)}{u}\right\rfloor \right\} \\
    &\leq l + \min \left \{\left\lfloor \frac{s + u + 1}{u + \frac{1}{\lambda}} -  l \right\rfloor, \left\lfloor\frac{s}{u} - l\right\rfloor \right \}  \quad \quad \quad (\lambda \geq 1)\\
    &= \min \left \{\left\lfloor \frac{s + u + 1}{u + \frac{1}{\lambda}}\right\rfloor, \left\lfloor\frac{s}{u}\right\rfloor \right \}
\end{align*}
    
\end{proof}

\subsection{Computation and Communication Performance}

We will take the ratio of our local computation upper bound to the trivial upper bound of $\lfloor \frac{s}{u} \rfloor$ and characterize its improvement. In Figure \ref{fig: Full Comm Comp} we plot the variation of this ratio for Protocol \ref{full comm protocol} with the number of sub-tasks for increasing values of the parameter $u$, which increases redundancy. Solid points indicate achievable local computation upper bounds for our protocol (corresponding to $p$ as a multiple of $n$), while for intermediate values of $p$, a loose upper bound is plotted based on the bound in Proposition \ref{prop: sub matrix size}.

\begin{figure}[htbp]
    \centering
    \includegraphics[width=0.85\linewidth]{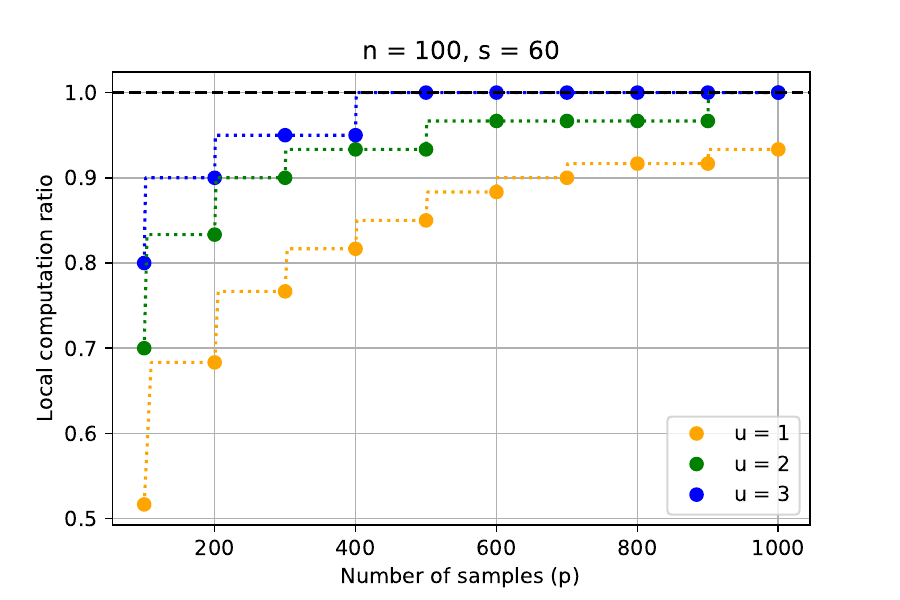}
    \caption{Local computation ratio ($\downarrow$) of performance of Protocol \ref{full comm protocol} to trivial performance of $\lfloor \frac{s}{u} \rfloor$ for three examples}
    \label{fig: Full Comm Comp}
\end{figure}

The nature of the plots suggests that under increasingly higher redundancy, the performance of our protocol has lower gains with respect to the trivial computation bound. Furthermore, as the number of sub-tasks increases, the computation performance of the protocol begins to converge to the trivial bound. This echoes an observation made in Remark \ref{rem: large sample regime}: while staggered job allocations reduce the potency of an adversary (by reducing the size of the largest sub-matrix), in regimes where $p$ is large compared to $n$, the adversary becomes more powerful as larger block sub-matrices are now possible.

Notice that Protocol \ref{full comm protocol} is a one-shot protocol, i.e., it receives all allotted sub-task results from all workers and uses this data to learn sub-task values with/without local computations. Therefore, it has a constant communication requirement of $\kappa = p \rho$ independent of the choice of adversarial attack. As the protocol uses full communication, it can be expected that this is a high communication overhead.

\section{Communication-Efficiency}
\label{Sec:LessC}
In this section, we propose modifications to Protocol \ref{full comm protocol} in the form of commitments aimed at reducing the higher communication overhead while retaining the local computation properties of the protocol.

\subsection{Commitment-Based Protocol}

Rather than receiving responses from all workers for all allotted sub-tasks, we nominate a single (possibly unique) worker for each sub-task, and that worker alone sends the result for the particular sub-task. From the rest of the workers allotted to each sub-task, we require just a single-bit commitment (yes/no) to the value put forward by the nominated worker. By doing this, we greatly reduce our communication requirement from that of the full communication case, as most of the communication now corresponds to single-bit values. This comes at the cost of an increased number of rounds.

\begin{protocol}{Commitment-Based Protocol}
    \begin{enumerate}
        \item Choose a single worker in each sub-task to act as the representative and obtain the designated sub-task result from each representative. Obtain single-bit commitments to these values from all other workers.
        \item Run Protocol \ref{full comm protocol} on this commitment table to identify malicious workers, and when possible, learn sub-task. Eliminate the identified malicious workers and retain only those sub-tasks whose results have not yet been learned.
        \item Repeat steps 1-2 until all sub-task results are learned.
    \end{enumerate}
    \label{commitment protocol}
\end{protocol}

\begin{lemma}
    For a given job allocation matrix and number of malicious workers, Protocol \ref{commitment protocol} will not require any more local computations than those required by Protocol \ref{full comm protocol}.
    \label{lem: commitment vs full comm}
\end{lemma}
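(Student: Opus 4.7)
The plan is to establish a simulation-based argument: for a fixed job allocation and adversarial attack, the local computations made by Protocol~\ref{commitment protocol} can be matched one-to-one with local computations that Protocol~\ref{full comm protocol} would perform on the same input. The two ingredients are (a) the effect of a single local computation is the same in both protocols, and (b) commitments collected over multiple rounds of Protocol~\ref{commitment protocol} are sufficient to emulate the logic-based deductions that Protocol~\ref{full comm protocol} makes from the full response matrix.

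First, I would observe that a local computation on sub-task $i$ reveals the true value $v_i^\ast$ in both protocols. In Protocol~\ref{full comm protocol}, this flags every worker whose full response differs from $v_i^\ast$ as malicious. In Protocol~\ref{commitment protocol}, comparing $v_i^\ast$ against the representative's reported value lets us flag as malicious either the ``no''-committing workers (if the representative is correct) or the representative together with the ``yes''-committing workers (if the representative is incorrect). In either case the set of flagged workers equals those whose underlying response on sub-task $i$ differs from $v_i^\ast$, which coincides with what Protocol~\ref{full comm protocol} would flag.

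Next, I would address the logic-based eliminations, i.e., the decisions Protocol~\ref{full comm protocol} makes without spending a local computation. These deductions depend only on knowing which workers report the same versus different values across pairs of corrupted sub-tasks. Protocol~\ref{commitment protocol} can reconstruct this agreement structure at no local-computation cost by iterating: in each round, a fresh representative is chosen for every remaining sub-task, and two workers report the same underlying value on a sub-task precisely when their commitments agree across all representative choices for that sub-task. Hence, after sufficiently many commitment rounds, Protocol~\ref{commitment protocol} has the same disagreement information that Protocol~\ref{full comm protocol} starts with and can trigger the same logic-based eliminations.

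The main obstacle I anticipate is showing that the adversary cannot exploit the iterative structure of Protocol~\ref{commitment protocol} to force additional local computations. Since honest workers commit consistently with their computed values, any malicious worker who varies their commitments across rounds (for a fixed sub-task, under representatives that report the same value) exposes themselves and is eliminated for free; thus the adversary's best strategy in Protocol~\ref{commitment protocol} is to commit consistently, and this reduces to the same attack being mounted against Protocol~\ref{full comm protocol}. Combining the three steps yields $c_{\text{Protocol~\ref{commitment protocol}}} \leq c_{\text{Protocol~\ref{full comm protocol}}}$, as claimed.
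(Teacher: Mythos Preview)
Your simulation approach is a natural idea, but it does not match how Protocol~\ref{commitment protocol} is actually specified, and this creates a genuine gap. Protocol~\ref{commitment protocol} does not accumulate commitment information across rounds in order to reconstruct the full agreement structure before acting; in each round it immediately runs Protocol~\ref{full comm protocol} on that round's (binary) commitment table and may expend local computations based on that partial view. Your claim (b)---that after enough rounds Protocol~\ref{commitment protocol} ``has the same disagreement information'' and ``can trigger the same logic-based eliminations''---therefore describes a different protocol from the one under analysis. Claim (a) is also not quite right: if the representative for sub-task $i$ is malicious and another malicious worker holds a \emph{third} value, that worker commits ``no'' and is not flagged by a local computation on $i$ in Protocol~\ref{commitment protocol}, whereas it would be flagged in Protocol~\ref{full comm protocol}; so the per-computation eliminations need not coincide, and the coupling you rely on breaks.

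The paper takes a quite different route and does not attempt an instance-by-instance simulation. It works directly with the structural bound $k^*$ (the largest $k$ admitting a $k\times uk$ all-ones sub-matrix), which is also Protocol~\ref{full comm protocol}'s worst-case cost. Suppose for contradiction that Protocol~\ref{commitment protocol} has already spent $k^*$ local computations on sub-tasks $T_1,\ldots,T_{k^*}$ yet still has an unresolved sub-task $T'$. Then the first round contained $k^*+1$ corrupted sub-tasks; since this exceeds the sub-matrix bound, some pair among them must fall into one of the cases of Lemma~\ref{lem: u corrupt correctness}, each of which spends at most one local computation and simultaneously resolves the partner sub-task. Hence $T'$ is resolved without an extra local computation, a contradiction. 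This combinatorial argument never needs Protocol~\ref{commitment protocol} to recover the full-communication view at all.
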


The proof is presented in Appendix \ref{app: commitment vs full comm proof}.

\subsection{Communication Performance}

In the execution of Protocol \ref{commitment protocol}, the set of unsolved sub-tasks in each round is a subset of that in the previous round. Therefore, in an attack where the set of unsolved sub-tasks remains the same over each round, the total number of rounds (and hence communication overhead) will be maximum.

\begin{proposition}
    The communication overhead (ignoring single-bit terms) of Protocol \ref{commitment protocol} satisfies $\kappa \leq p + \left(\lfloor \frac{s}{u} \rfloor - 1 \right) u$
    \label{prop: worst communication}
\end{proposition}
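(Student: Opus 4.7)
I will bound the communication by decomposing it round by round. Round $1$ contributes exactly $p$ value transmissions (one value per sub-task from its chosen representative); the single-bit commitments from the other allocated workers are excluded by assumption. The task is then to show that rounds $r \geq 2$ contribute at most $(\lfloor s/u \rfloor - 1) u$ additional value transmissions. I will follow the observation immediately preceding the proposition: since the set of unsolved sub-tasks is monotonically non-increasing across rounds, the worst case for communication is one in which this set is retained as long as possible.

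\textbf{Step 1 (structure of carry-over sub-tasks):} I will first characterize any sub-task that fails to be resolved in a given round. In the commitment table, if fewer than $u$ workers dissent from the representative, those dissenters are marked malicious by step $1$ of Protocol~\ref{full comm protocol} (the agreeing side contains more than $s$ workers and therefore at least one honest one, confirming the representative's value). If more than $u$ workers dissent, the greedy phase of Protocol~\ref{full comm protocol} performs a local computation on the sub-task and thereby solves it in the same round. Consequently, every sub-task retained for round $r+1$ must exhibit exactly $u$ dissenters in round $r$, placing it squarely in the u-disagreement regime handled by the logic-based subroutine of Lemma~\ref{lem: u corrupt correctness}.

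\textbf{Step 2 (per-round accounting via the local computation bound):} By Lemma~\ref{lem: commitment vs full comm}, the total number of local computations performed by Protocol~\ref{commitment protocol} across all its rounds is at most $\lfloor s/u \rfloor$, and each such computation both solves one sub-task and eliminates exactly $u$ malicious workers. In the regime isolated in Step~$1$, Lemma~\ref{lem: u corrupt bound} tells us that the worst-case residual attack on the retained sub-tasks is the constrained symmetrization attack on the largest sub-matrix of $1$'s, which requires one local computation per targeted sub-task. I will use this to argue that in each round beyond the first, the adversary can sustain at most $u$ carry-over sub-tasks: any larger carry-over would, after absorbing the round's local computation, leave a residual symmetrization structure that the balanced allocation together with the $\leq s$ cap on malicious workers cannot support, and hence would be resolved by contradiction in the logic-based subroutine.

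\textbf{Step 3 (summing):} Combining Step~$1$ and Step~$2$, the number of rounds beyond the first is at most $\lfloor s/u \rfloor - 1$ (each consuming one of the global budget of local computations, with the first local computation charged to round~$1$), and each such round transmits at most $u$ values. Adding the $p$ transmissions from round~$1$ yields $\kappa \leq p + (\lfloor s/u \rfloor - 1) u$, as claimed.

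\textbf{Main obstacle:} The most delicate step is Step~$2$, namely showing that at most $u$ sub-tasks persist into each subsequent round under the worst-case adversary. This requires a careful case analysis of the commitment-based view: one must argue that any attempt by the adversary to carry over more than $u$ sub-tasks either forces the greedy phase to perform additional local computations (eroding the budget in a way that already fits within the bound) or produces a contradiction detectable by the logic-based subroutine, shrinking the carry-over set below $u$ before the next round begins.
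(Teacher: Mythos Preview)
Your round-by-round decomposition matches the paper's skeleton, but two of the load-bearing steps miss the mark.

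In Step~1 you conflate ``dissent from the representative'' with the minority group in Protocol~\ref{full comm protocol}'s sense. The paper's worst-case carry-over configuration has the malicious representative sitting in a group of exactly $u$ \emph{agreers}, while the $s$ honest workers all dissent. So there are $s > u$ dissenters, yet the greedy phase does not trigger (the minority group has size $u$), the logic-based subroutine marks the representative's $u$-group as malicious, and the sub-task carries over with its value still unknown. Your trichotomy ``$<u$ dissent $\Rightarrow$ solved; $>u$ dissent $\Rightarrow$ locally computed; hence carry-over $\Rightarrow$ exactly $u$ dissenters'' therefore excludes precisely the configuration that drives the bound. The correct object to track is the size of the \emph{representative's agreeing group}, not the number of dissenters.

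The more serious gap is in Steps~2--3, where you charge each extra round against the global local-computation budget of $\lfloor s/u\rfloor$ from Lemma~\ref{lem: commitment vs full comm}. In the paper's worst-case attack the protocol performs \emph{no} local computations in any round: the $u$-group containing the representative is exposed purely by the logic-based subroutine (e.g.\ Case~1 of Lemma~\ref{lem: u corrupt correctness}) without a single oracle call. Rounds therefore do not draw down the local-computation budget, and Lemma~\ref{lem: commitment vs full comm} gives you nothing here. The resource that actually bounds the number of rounds is the stock of \emph{malicious workers}: each round with carry-over burns the $u$ malicious agreers (including the representative) who are identified and eliminated, so with only $s$ malicious workers available the adversary can sustain at most $\lfloor s/u\rfloor$ rounds. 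The paper's argument proceeds exactly this way---it exhibits the attack in which $u$ malicious workers are spent per round over $\lfloor s/u\rfloor$ rounds with $u$ sub-tasks carried over each time, and argues (informally) that this configuration is worst for communication. To repair your proposal you would need to replace the local-computation accounting by this malicious-worker accounting.
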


In the worst case, each of the representatives chosen in corrupted sub-tasks in each round are malicious. In the first round, we have $u$ corrupted sub-tasks, with the $u$ chosen representatives agreeing in each sub-task and disagreeing with the remaining workers. Note that we require exactly $u$ workers agreeing, as more than $u$ such workers would cause a local computation on that sub-task, hence solving it and reducing the set of unsolved sub-tasks. An instance of the described attack in a round is depicted in Figure \ref{fig:commitment worst case}. In this example, $W_3$ and $W_4$ behave maliciously, agreeing with each other on sub-tasks 2 and 3 (denoted by single-bit commitment 1).

Following the execution of Protocol \ref{full comm protocol}, these $u$ workers are marked as malicious without local computations. However, since they were the representatives for their respective sub-tasks, these sub-tasks remain unsolved going into the next round. The analysis is carried out similarly in the following rounds.

Such an attack can only be carried out for a total of $\lfloor \frac{s}{u} \rfloor$ rounds. In the first round, there are $p$ sub-task values being reported (one for each of the $p$ sub-tasks). In the following rounds, there are $u$ sub-task values being reported (one for each of the $u$ unsolved sub-tasks). Hence we have the stated communication overhead bound.

\begin{figure}[htbp]
    \centering
    \includegraphics[width=0.6\linewidth]{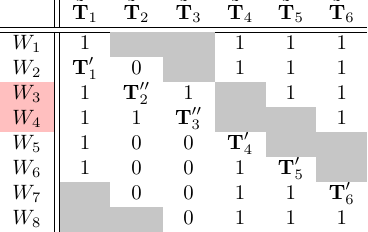}
    \caption{Example of worst case in a round for $s = 4, u = 2$}
    \label{fig:commitment worst case}
\end{figure}

\clearpage

\bibliographystyle{IEEEtran}
\bibliography{references}

\clearpage

\onecolumn

\appendix

\subsection{Logic-Based Protocol Description} \label{app: logic protocol description}

\begin{figure*}[htbp]
    \centering
    \includegraphics[width = 0.88\textwidth]{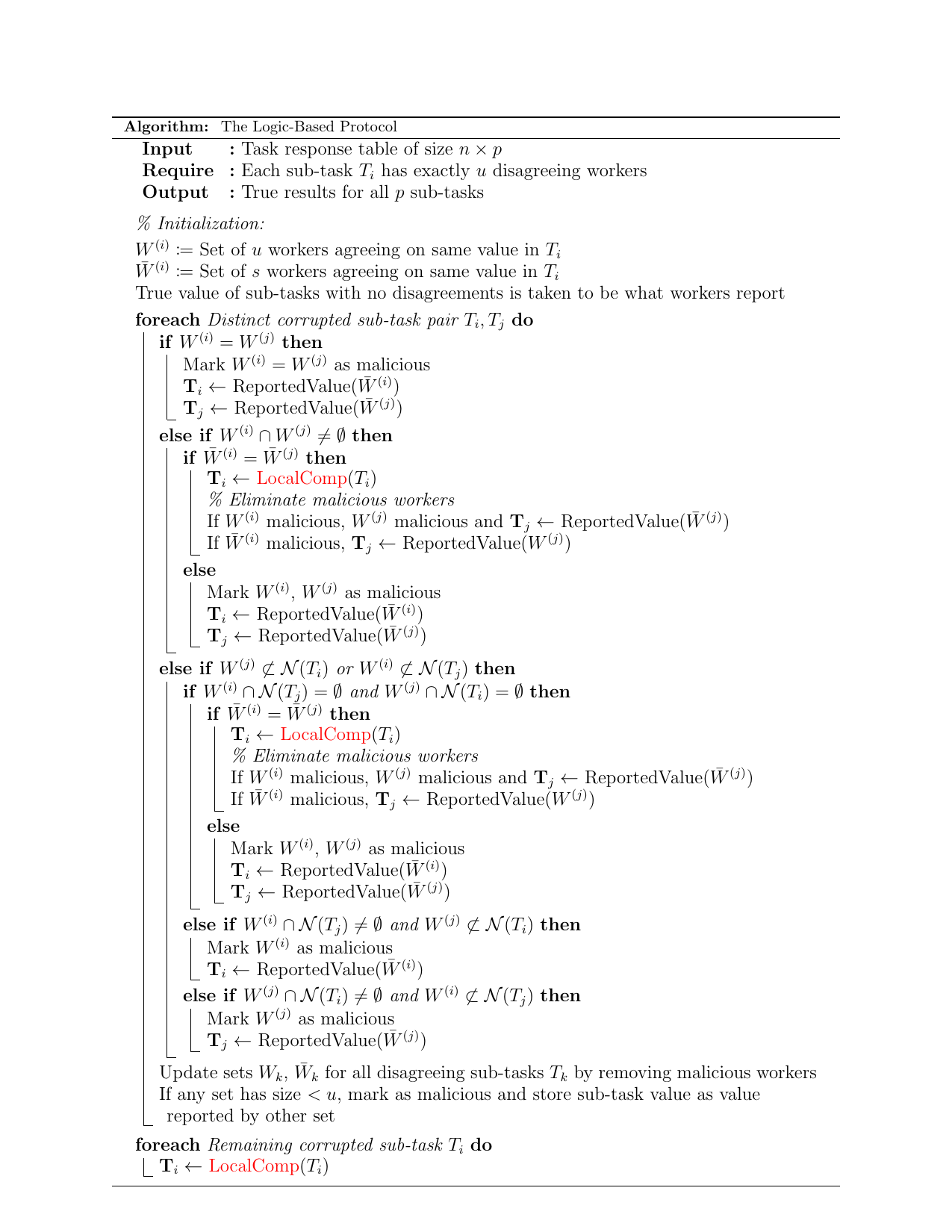} 
    \label{fig:logic algo}
\end{figure*}

\twocolumn

\subsection{Constrained Symmetrization Attack}\label{app: constrained symm attack formal}

Suppose that we have a protocol which handles the attack described in Proposition \ref{prop: constrained symm} with $c$ local computations, where $c$ satisfies $c < \min\{k^*, \lfloor \frac{s}{u} \rfloor\}$ local computations. Then, there is a sub-task $\Tilde{\vect{T}}_i$ in the largest $k \times uk$ sub-matrix that has not been locally computed, where $u$ workers report one value and $s$ workers report another. The $c$ local computations performed so far have revealed $cu$ malicious workers, and these $cu$ malicious workers are a strict subset of the $s$ workers in $\Tilde{\vect{T}}_i$. Therefore, among workers not found to be malicious, sub-task $\Tilde{\vect{T}}_i$ has $u$ workers reporting one value and $s - cu$ workers reporting another, but there are also $s-cu$ malicious workers left. 

This gives rise to two indistinguishable scenarios where either set may be malicious. Therefore, an extra local computation is required on this sub-task, contradicting the fact that the protocol requires $c$ local computations. Hence, we cannot have a protocol requiring strictly fewer than $\min\{k^*, \lfloor \frac{s}{u} \rfloor\}$ local computations.

\subsection{Sub-matrix size of balanced cyclic matrices}\label{app: Submatrix of cyclic matrix}

Let us first tackle the first expression of Proposition \ref{prop: sub matrix size}. The balanced cyclic job allocation with $p = \lambda n$ consists of $\lambda$ repeating blocks of a cyclic square matrix. Furthermore, in a cyclic square matrix, any $t$ consecutive data samples have $\rho - (t - 1)$ workers in common. Since each row in the cyclic square matrix repeats $\lambda$ times in the job allocation matrix, we have that $\lambda t$ data samples have $\rho - (t-1)$ workers in common. For these $\lambda t$ data samples to be part of a $k \times uk$ sub-matrix of 1's,

\begin{align*}
    u\lambda t &\leq \rho - (t - 1) \\
    \implies (u\lambda + 1)t & \leq \rho + 1 \\
    \implies \lambda t &\leq \frac{\rho + 1}{u + \frac{1}{\lambda}}
\end{align*}

Therefore, the largest $k$ such that the sub-matrix exists is given by $k_{\lambda}^*$ mentioned in the proposition. 
Now we move on to the second expression of the proposition. Note that when $\lambda n < p < (\lambda + 1)n$, the largest sub-matrix size is bounded between those when $p = \lambda n$ and $p = (\lambda + 1)n$ since the addition of rows (increasing $p$) can only increase the size of the largest sub-matrix. Furthermore, with every row addition (increment of $p$ by 1), the largest sub-matrix size can only increase by at most 1. Putting these two facts together, we get a bound on the value of $k^*$ for a balanced cyclic job allocation matrix.

\subsection{Proof of Lemma \ref{lem: u corrupt correctness}} \label{app: correctness proof}

We can treat the job allocation matrix $\vect{A}$ as an adjacency matrix to describe a bipartite graph, with an edge existing between a worker and a sub-task if the sub-task was allotted to the worker. For balanced job allocations, this describes a semi-regular bipartite graph $\mathcal{G} = (W, D, E)$, where $W$ is the set of workers, $D$ is the set of sub-tasks, and $E$ is the collection of edges $(W_i, T_j)$ such that sub-task $T_j$ is allotted to worker $W_i$. 

The neighborhood of a set of nodes $A$ (denoted by $\mathcal{N}(A)$) is defined as the set of nodes that share an edge with a node in $A$. Using this definition, we will represent the sub-tasks allotted to worker $W_i$ as $\mathcal{N}(W_i)$, and the workers allotted to sub-task $T_j$ as $\mathcal{N}(T_j)$. Therefore, $|\mathcal{N}(T_j)| = \rho$ and $|\mathcal{N}(W_i)| = p\rho/n$.

\begin{proof}

Let us denote $W^{(i)} \vcentcolon= \left\{W_1^{(i)}, W_2^{(i)}, \ldots, W_u^{(i)}\right\}$ to be the set of $u$ disagreeing workers for corrupted sub-task $T_i$. We shall also denote by $\bar{W}^{(i)} \vcentcolon =\{\bar{W}_j^{(i)}\}_{j = 1}^{s}$, the $s$ agreeing workers, i.e. workers in the set $\mathcal{N}(T_i) \setminus W^{(i)}$. Now, we shall consider the set of disagreeing workers $W^{(i)}$ and $W^{(j)}$ for any two distinct corrupted sub-tasks $T_i$ and $T_j$ respectively. By distinct, we mean that it should not be the case that both $W^{(i)} = W^{(j)}$ and $\mathcal{N}(T_i) \setminus W^{(i)} = \mathcal{N}(T_j) \setminus W^{(j)}$, since if they were equal then we only need to consider one of these sub-tasks, as identifying malicious workers in one of them also does so in the other. Then, we have the following three (disjoint) possibilities:
\begin{enumerate}[leftmargin=*]
    \item \underline{$W^{(i)} = W^{(j)}$}:\\
    Let us assume the set of $u$ workers $W^{(i)}$ to be honest. Therefore, we have that the set of $s$ workers $\mathcal{N}(T_i) \setminus W^{(i)}$ is malicious. Similarly, since $W^{(j)} = W^{(i)}$, we have that the set of $s$ workers $\mathcal{N}(T_j) \setminus W^{(j)}$ is also malicious.\\
    However, these two sets are each of size $s$ and are not equal. Therefore, we have $>s$ potential malicious workers, which is a contradiction. Therefore, our assumption is false, and the set $W^{(i)} = W^{(j)}$ is malicious, which has been concluded without a local computation. 

    \item \underline{$W^{(i)} \cap W^{(j)} \neq \emptyset$}:\\
    In this scenario, we have two possible cases:
    \begin{enumerate}
        \item $\mathcal{N}(T_i) \setminus W^{(i)} = \mathcal{N}(T_j) \setminus W^{(j)}$: \\
        In this case, there is no way to distinguish the two sub-tasks, so we must perform a local computation on any one of them. Suppose we perform a local computation of $T_i$.\\
        If this reveals that $\mathcal{N}(T_i) \setminus W^{(i)}$ is malicious, then we have found all of our $s$ malicious workers using a single local computation, and these are also the $s$ agreeing workers in $T_j$. Therefore, we learn that the sets $W^{(i)}, W^{(j)}$ are honest workers.\\
        Alternatively, if we learn that the set $W^{(i)}$ is malicious, then we can eliminate the workers in this set, which also eliminates some workers in $W^{(j)}$, giving us $W^{(j)'} = W^{(j)} \setminus W^{(i)}$ (because $W^{(i)} \cap W^{(j)} \neq \emptyset$). Now, in $T_j$, we have one agreeing group with $s$ workers and one with $< u$ workers ($W^{(j)'}$). But the number of remaining malicious workers is $<s$, and hence the group of $s$ workers must be reporting the true value. Therefore, $W^{(j)'}$ is also malicious, and we have found $|W^{(i)} \cup W^{(j)}| > u$ malicious workers using a single local computation.
        \item $\mathcal{N}(T_i) \setminus W^{(i)} \neq \mathcal{N}(T_j) \setminus W^{(j)}$:\\
        Let us assume the set of $u$ workers $W^{(i)}$, to be honest. Therefore, the set $\mathcal{N}(T_i) \setminus W^{(i)}$ are our $s$ malicious workers.\\
        Suppose $\exists W_k^{(i)} \in W^{(i)}$ such that $W_k^{(i)} \in \mathcal{N}(T_j) \setminus W^{(j)}$. Then $W_k^{(i)}$ is, by assumption, honest and disagrees on $T_j$ with another honest (by assumption) worker $W_l^{(j)} \in W^{(i)} \cap W^{(j)}$, which is a contradiction. Therefore, our assumption is false, and the set $W^{(i)}$ (and hence $W^{(j)}$) is malicious, which has been concluded without a local computation. \\
        Instead, suppose we are unable to find a worker such that the above holds. Then $\exists \bar{W}_k^{(j)} \in (\mathcal{N}(T_j) \setminus W^{(j)})\setminus(\mathcal{N}(T_i) \setminus W^{(i)})$ which has not been assumed as honest or malicious because it neither belongs to $W^{(i)}$ nor $\mathcal{N}(T_i) \setminus W^{(i)}$. However, $\bar{W}_k^{(j)}$ disagrees with some honest (by assumption) worker $W_l^{(j)} \in W^{(i)} \cap W^{(j)}$, which means that $\bar{W}_k^{(j)}$ is also malicious, giving rise to $>s$ malicious workers. This is a contradiction, meaning our assumption is false, and the set $W^{(i)}$ (and hence $W^{(j)}$) is malicious, which has been concluded without a local computation.
    \end{enumerate}

    \item \underline{$W^{(i)} \cap W^{(j)} = \emptyset$}:\\
    Suppose not all disagreeing workers have been allotted all corrupted sub-tasks. Then, we can find $T_i, T_j$ such that $W^{(j)} \not\subset \mathcal{N}(T_i)$. If $T_i, T_j$ are such that $W^{(i)} \cap \mathcal{N}(T_j) = \emptyset$ and $W^{(j)} \cap \mathcal{N}(T_i) \neq \emptyset$, then without loss of generality, we can take $T_i$ to be $T_j$ and vice versa. Then, we have the following two possible cases:

    \begin{enumerate}
        \item $W^{(i)} \cap \mathcal{N}(T_j) \neq \emptyset$:\\
        Let us assume the set of $u$ workers $W^{(i)}$, to be honest. Therefore, the set $\mathcal{N}(T_i) \setminus W^{(i)}$ are our $s$ malicious workers.\\
        In $T_j$, we have some honest (by assumption) $W_k^{(i)} \in W^{(i)} \cap \mathcal{N}(T_j)$ such that $W_k^{(i)} \in \mathcal{N}(T_j) \setminus W^{(j)}$  because $W^{(i)} \cap W^{(j)} = \emptyset$. We also have (by assumption) some $W_l^{(j)} \in W^{(j)}$ such that $W_l^{(j)} \not\in \mathcal{N}(T_i)$ and therefore has not been assumed as honest or malicious. However, $W_k^{(i)}$ and $W_l^{(j)}$ disagree, giving rise to $>s$ malicious workers, which is a contradiction. Therefore, our assumption is false, and the set $W^{(i)}$ is malicious, which has been concluded without local computation.  
        \item $W^{(i)} \cap \mathcal{N}(T_j) = \emptyset$:\\
        Since we have $W^{(i)} \cap \mathcal{N}(T_j) = \emptyset$ and $W^{(j)} \cap \mathcal{N}(T_i) = \emptyset$, we have the following two sub-cases:
        \begin{enumerate}[label=(\roman*)]
            \item $\mathcal{N}(T_i) \setminus W^{(i)} = \mathcal{N}(T_j) \setminus W^{(j)}$:\\
            Just as in Case 2(a), we cannot distinguish between the two sub-tasks, so we must perform a local computation on any one of them. Suppose we perform a local computation on $T_i$.\\
            If this reveals that $\mathcal{N}(T_i) \setminus W^{(i)}$ is malicious, then we have found all of our $s$ malicious workers using a single local computation, and these are also the $s$ agreeing workers in $T_j$. Therefore, we learn that the sets $W^{(i)}, W^{(j)}$ are honest workers.\\
            Alternatively, if we learn that the set $W^{(i)}$ is malicious, we can eliminate these workers, which does not change the number of responses in $T_j$. Now, in $T_j$, we have one agreeing group with $s$ workers and one with $u$ workers, but the number of remaining malicious workers is $<s$. Therefore, the group of $s$ agreeing workers is reporting the true value, and the set $W^{(j)}$ is malicious. Thus, we have found $|W^{(i)} \cup W^{(j)}| = 2u$ malicious workers using a single local computation. 
            \item $\mathcal{N}(T_i) \setminus W^{(i)} \neq \mathcal{N}(T_j) \setminus W^{(j)}$:\\
            In this case, we can directly infer that both the sets $W^{(i)}$ and $W^{(j)}$ are malicious. Suppose we assume $W^{(i)}$ to be honest and thus $\mathcal{N}(T_i) \setminus W^{(i)}$ to be our $s$ malicious workers. \\
            Then, $\exists \bar{W}_k^{(j)} \in (\mathcal{N}(T_j) \setminus W^{(j)})\setminus(\mathcal{N}(T_i) \setminus W^{(i)}$ which has not been assumed as honest or malicious because it neither belongs to $W^{(i)}$ nor $\mathcal{N}(T_i) \setminus W^{(i)}$. However, $\bar{W}_k^{(j)}$ disagrees with workers in the set $W^{(j)}$, which have not been assumed as honest or malicious. This gives rise to $> s$ malicious workers, which is a contradiction. Therefore, our assumption is false, and the set $W^{(i)}$ is malicious. A symmetric argument also gives us that the set $W^{(j)}$ is malicious.
        \end{enumerate}
    \end{enumerate}

\end{enumerate}
    
\end{proof}

To visualize these scenarios, an example for each of the cases in the proof is provided in Figure \ref{fig: proof cases}. In the case of assumptions being made to prove a contradiction, workers who are assumed to be honest are marked in green, and those who are malicious are marked in red.

In Case 1, we are assuming that $W_4, W_5$ are honest, giving that $W_2, W_3, W_6, W_7$ in $\Tilde{\vect{T}}_2$ and $W_8$ in $\Tilde{\vect{T}}_3$ are malicious. In Case 2(b), we are assuming that $W_4, W_5$ are honest, giving that $W_3, W_6, W_7, W_8$ in $\Tilde{\vect{T}}_3$ and $W_9$ in $\Tilde{\vect{T}}_5$ are malicious. In Case 3(a), we are assuming that $W_3, W_4$ are honest, giving that $W_2, W_5, W_6, W_7$ in $\Tilde{\vect{T}}_2$ and $W_8$ in $\Tilde{\vect{T}}_3$ are malicious. In Case 3(b)(ii), we are assuming that $W_3, W_4$ are honest in $\Tilde{\vect{T}}_2$, giving rise to more malicious workers in $\Tilde{\vect{T}}_5$ as $W_5, W_6, W_7$ disagree with $W_8, W_9$.

\begin{figure}[!htbp]
    \centering
    \begin{subfigure}{0.87\linewidth}
        \centering
        \includegraphics[width = 0.9\linewidth]{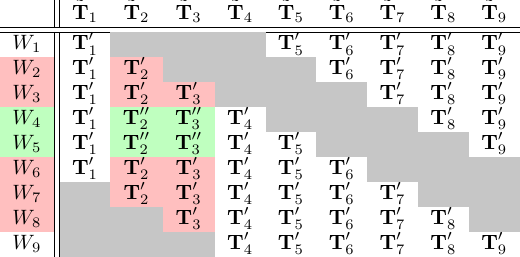}
        \caption{Case 1}
    \end{subfigure}
    \begin{subfigure}{0.87\linewidth}
        \centering
        \includegraphics[width = 0.9\linewidth]{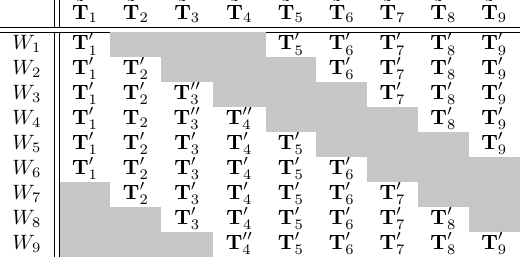}
        \caption{Case 2(a)}
    \end{subfigure}
    \begin{subfigure}{0.87\linewidth}
        \centering
        \includegraphics[width = 0.9\linewidth]{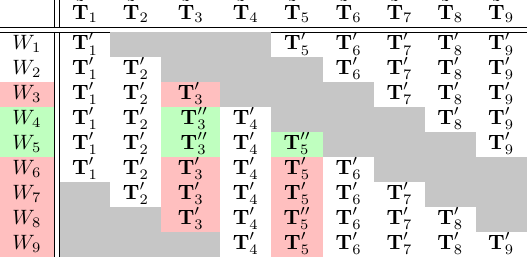}
        \caption{Case 2(b)}
    \end{subfigure}
    \begin{subfigure}{0.87\linewidth}
        \centering
        \includegraphics[width = 0.9\linewidth]{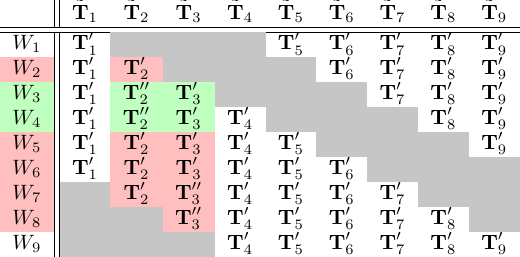}
        \caption{Case 3(a)}
    \end{subfigure}
    \begin{subfigure}{0.87\linewidth}
        \centering
        \includegraphics[width = 0.9\linewidth]{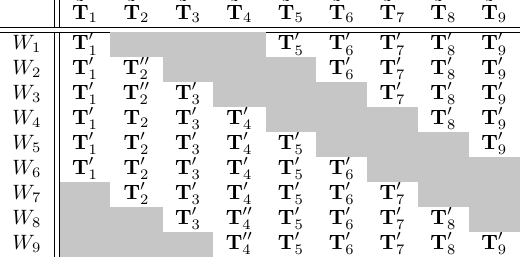}
        \caption{Case 3(b)(i)}
    \end{subfigure}
    \begin{subfigure}{0.87\linewidth}
        \centering
        \includegraphics[width = 0.9\linewidth]{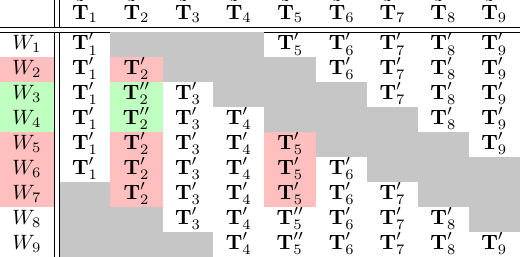}
        \caption{Case 3(b)(ii)}
    \end{subfigure}
    \caption{Example of proof cases for $n = p = 9, s = 4, u = 2$}
    \label{fig: proof cases}
\end{figure}

\subsection{Proof of Lemma \ref{lem: commitment vs full comm}} \label{app: commitment vs full comm proof}

\begin{proof}

Let $k^*$ be the largest $k$ such that there is a $k \times uk$ sub-matrix of 1's in the job allocation matrix. As seen earlier, this is also the local computation bound for Protocol \ref{full comm protocol}. Suppose in an execution of Protocol \ref{commitment protocol}, we have performed $k^*$ local computations (say, on sub-tasks $\vect{T}_1, \ldots, \vect{T}_{k^*}$) and there is still a sub-task left (say, $\vect{T}'$) with its value unknown. If we term each iteration of steps 1 and 2 in Protocol \ref{commitment protocol} as a ``round'', then in the first round, $\vect{T}_1, \ldots, \vect{T}_{k^*}, \vect{T}'$ were corrupted sub-tasks. 

Among these $k^* + 1$ sub-tasks, there must exist at least one pair of sub-tasks such that their responses satisfy any one of the cases listed in the proof of Lemma \ref{lem: u corrupt correctness}. This is because none of those cases hold if and only if the sub-tasks and their responses resemble a symmetrization attack on the largest $k \times uk$ sub-matrix. However, since we have more sub-tasks than the size of the largest possible sub-matrix ($k^*$), at least one case will hold. 

Furthermore, in every one of these cases, the protocol performs a local computation on at most one sub-task. Since we assumed that the sub-tasks $\vect{T}_1, \ldots, \vect{T}_{k^*}$ were locally computed, the pair where such a case occurs must be of the form $(\vect{T}', \vect{T}'')$ where $\vect{T}'' \in \{\vect{T}_1, \ldots, \vect{T}_{k^*}\} $. 

Lastly, notice that when a local computation is done on $\vect{T}''$ in Protocol \ref{full comm protocol}, it also solves $\vect{T}'$ by reducing the number of malicious workers in the sub-task to $< u$. Therefore, $\vect{T}'$ can be solved without a local computation, ensuring that Protocol \ref{commitment protocol} requires no more than the $k^*$ local computations.

\end{proof}

\end{document}